\newcommand{\gaussbinom}[2]{\left[ \genfrac{}{}{0pt}{}{#1}{#2} \right]_2}
\definecolor{lightgray}{RGB}{224,224,224}
\newtheorem{theorem}{Theorem}
\newtheorem{exmp}{Example}
\newtheorem{corollary}{Corollary}
\newtheorem{definition}{Definition}
\newtheorem{remark}{Remark}
\newtheorem{proposition}{Proposition}
\newtheorem*{theorem*}{Theorem}
\newcounter{myboxctr}
\newtcolorbox{myrefbox}[2][]{%
  colback= white,
  colframe=gray,
  fonttitle=\bfseries,
  before upper={},
  title={\refstepcounter{myboxctr}\label{#1}Box~\themyboxctr:~#2}
}
\author{\IEEEauthorblockN{Hoang~Ly and Emina~Soljanin}
\IEEEauthorblockA{\textit{Department of Electrical and Computer Engineering, Rutgers University, USA} \\
\texttt{\{mh.ly,emina.soljanin}\}@rutgers.edu.}
\thanks{This work was supported in part by the NSF-BSF
grant FET-2120262, and has been presented at the Joint Mathematics Meetings (JMM), Washington, D.C., January 2026.}
}
\begin{document}
\title{Optimum 1-Step Majority-Logic Decoding of Binary Reed--Muller Codes
}



\maketitle

\begin{abstract}
The classical majority-logic decoder proposed by Reed for Reed--Muller codes \( \mathrm{RM}(r, m) \) of degree order \( r \) and length \( 2^m \), unfolds in \( r + 1 \) sequential steps, decoding message symbols from highest to lowest degree. Several follow-up decoding algorithms reduced the number of steps, but for a limited set of parameters, or at the expense of reduced performance, or relying on the existence of some combinatorial structures. We show that any \emph{one-step majority-logic decoder}—that is, a decoder performing all majority votes in one step simultaneously without sequential processing—can correct at most \( d_{\min}/4 \) errors for all values of \( r \) and \( m \), where \( d_{\min} \) denotes the code’s minimum distance. We then introduce a new hard-decision decoder that completes the decoding in a single step and attains this error-correction limit. It applies to all \( r \) and \( m \), and can be viewed as a parallel realization of Reed’s original algorithm, decoding all message symbols simultaneously. Remarkably, we also prove that the decoder is \emph{optimum} in the erasure setting: it recovers the message from any erasure pattern of up to \( d_{\min} - 1 \) symbols—the theoretical limit. To our knowledge, this is the first 1-step decoder for RM codes that achieves both optimal erasure correction and the maximum one-step error correction capability.
\end{abstract}
\section{Introduction}
Majority-logic decoding (MLD)~\cite{reed1954class,Massey1963Threshold} is a simple, practical, hard-decision decoding method. MLD recovers each message symbol by taking a majority vote over the values of multiple recovery sets (also known as repair groups)—collections of received symbols whose linear combination equals the target message symbol. Each such set provides an independent equation, and decoding succeeds if the majority remain uncorrupted.
MLD was originally attractive because of its implementation simplicity, which requires only low-cost decoding circuitry, and extremely small decoding delay~\cite{Class_Of_MLDcodes}. 
Today, despite major advances in hardware and software, there has been a resurgence of interest in low-complexity decoders for latency-critical applications, which often favor hard-decision algorithms over soft-decision alternatives~\cite{RMdecoding:journals/tcom/BertramHH13}. 
In fact, for certain families of codes—such as Grassmann codes—MLD remains the only known method for efficient practical decoding~\cite{Grassmannian_code_MJ}.
Moreover, MLD frequently corrects more errors than its guaranteed radius, since it does not perform bounded-distance decoding~\cite{RMdecoding:journals/tit/Chen71}. 
Codes that can be decoded using such algorithms are known as \emph{majority-logic decodable codes}~\cite{Coding:books/PetersonW72}. 
Any majority-logic decodable code with minimum distance \( d_{\min} \) can correct up to \( d_{\min} - 1 \) erasures, which is the theoretical maximum for unique decoding~\cite{Coding:books/PetersonW72}. 
Historically, this optimal erasure tolerance made majority-logic decodable codes attractive for applications such as deep-space communication and memory protection~\cite{MLD_memory,OS_MLD_memory_protection,MLD_LDPC}. More recently, MLD has also been linked to combinatorial designs and to the notion of the \emph{Service Rate Region}, a metric that quantifies the utility of linear codes for optimizing data access~\cite{SRR_Design:lySV2025}. Furthermore, MLD has re-emerged as the efficient decoding method for \emph{locally recoverable codes}—a class of codes that has attracted significant attention due to their applications in distributed storage and private information retrieval~\cite{MLD_LRC_26}.

One of the earliest known majority-logic decodable codes is the class of Reed--Muller (RM) codes, for which the MLD algorithm was introduced as early as 1954 in the landmark paper by Reed~\cite{reed1954class}. This is a sequential algorithm that decodes the codeword symbols of an \( \mathrm{RM}(r, m) \) code—of order \( r \) and length \( 2^m \)—in exactly \( r + 1 \) steps. Several subsequent papers aimed to reduce the number of sequential steps in Reed’s decoding. \textcolor{black}{Chen made a significant advancement in 1971~\cite{RMdecoding:journals/tit/Chen71} by showing that finite geometry codes are majority-logic decodable in at most three steps; in particular, Reed--Muller codes $\mathrm{RM}(r,m)$ can be decoded in two steps by observing that certain stages in Reed’s original decoding algorithm can be skipped. Chen’s approach associates each codeword symbol with a $0$-dimensional flat (a point) in the corresponding Euclidean geometry. He then constructs carefully chosen families of $(r+1)$-flats whose pairwise intersections are $r$-flats, and whose $r$-flats in turn intersect at the underlying $0$-flats corresponding to codeword symbols. This hierarchical intersection structure enables majority-logic decisions to be performed in two stages. The resulting decoding algorithm can correct up to $\lfloor (d_{\min}-1)/2 \rfloor$ errors, i.e., approximately half the minimum distance, which is the theoretical limit for unique decoding. Nevertheless, the algorithm remains inherently sequential (requiring two decoding steps) and is restricted to the regime $r \le m/2$. Hardware-efficient refinements of Chen’s method were later proposed in~\cite{RMdecoding:journals/tcom/BertramHH13}, but these implementations remain two-step and subject to the same parameter constraints.}

Alternatively, Peterson and Weldon~\cite{Coding:books/PetersonW72} proposed a two-step MLD algorithm for geometric codes that also achieves half-distance error correction. More recently, the method was adapted in~\cite{MLDecoding_subspace:journals/tit/CruzW21} to leverage subspace designs for reducing the decoding complexity, though the decoding process still proceeds in two steps. \textcolor{black}{In parallel, MLD has been applied to efficiently decode codes with weaker algebraic or geometric structure, such as low-density parity-check (LDPC) codes~\cite{Analysis_MLD_LDPC_faulty,MLD_LDPC}. 
More recently, an MLD algorithm was developed for polar codes—a class of capacity-achieving codes whose existing efficient decoders are primarily soft-decision based. 
This method exploits the structural relation between polar and Reed--Muller codes, supports both soft- and hard-decision decoding, and achieves error-correction performance up to the theoretical bound~\cite{Polar_code_MLD}.}

A different approach connects MLD capability to the combinatorial structure of the code. Rudolph was the first to connect finite geometries to the construction
of majority-logic decodable codes. He showed that if the set of supports of codewords in the dual code forms a \( (v, b, r, k, \lambda) \) balanced incomplete block design (BIBD), then up to \( \left\lfloor \frac{r}{2\lambda} \right\rfloor \)
errors can be corrected using MLD in only one step
\cite{RMdecoding:journals/tit/Rudolph67}. This result was later slightly modified and improved in~\cite{Rudolph_modified}. The key idea is that when the dual code supports a BIBD, (possibly overlapping) recovery sets for all message symbols can be constructed, enabling majority voting to be performed in parallel across all symbols. We observe that allowing overlaps increases the number of available recovery sets, but an error in a received coordinate corrupts every recovery set that includes it. Therefore, the effectiveness of MLD depends critically on the intersection structure among the recovery sets. Rahman and Blake generalized this approach from BIBDs to arbitrary combinatorial $t$-designs ($t>2$) and applied it to analyze the decoding capability of one-step majority-logic decoding (1S-MLD) for first-, second-, and third-order RM codes~\cite{RMdecoding:journals/tit/RahmanI75}. Their analysis relies on the fact that the supports of codewords of a fixed weight in Reed–Muller codes form a $3$-design, which follows from the classical result that Reed–Muller codes are invariant under a triply transitive permutation group~\cite{KasamiLin1966AFCRL}.

These results established a connection between the dual code structure and the performance of 1S-MLD. \textcolor{black}{Their key limitation is that no $t$-designs with strength $t > 5$ are currently known. For RM codes in particular, only $3$-designs supported by codewords of the dual are known~\cite{Design_codes:books/Designs_and_Codes}, which implies that these one-step decoding algorithms are constrained to work only with codes of limited minimum distance—at most $9$ in the best-known cases~\cite{RMdecoding:journals/tit/RahmanI75}.} In their work, Rahman and Blake conjectured that the advantage of their method over the modified Rudolph algorithm would become more pronounced if the $t$ parameter of the $t$-designs increases, i.e., for $t=4$ or $5$. Thus, the performance of 1S-MLD is fundamentally constrained by its intrinsic reliance on the combinatorial structure of the underlying dual code. For many codes, 1S-MLD can correct far fewer errors than one-half of the minimum distance~\cite[Ch.~10]{Coding:books/PetersonW72}. In fact, a 1S-MLD that works for arbitrary code parameters and corrects a number of errors that grows with the minimum distance of the code (beyond a fixed constant) remains unknown, despite being highly coveted. For a concise historical overview, see~\cite[Ch.~5]{Huffman2021Concise}.

Overlapping recovery sets of code symbols play a role in an entirely different line of work related to data access from coded storage \cite{SRR:journals/tit/AktasJKKS21}.
In particular, we characterized the recovery sets of RM codes in \cite{SRR_RM_ISIT,SRR:preprint/arxiv/LySL25}, and further studied their combinatorial relations and connections to combinatorial designs in \cite{SRR_Design:lySV2025}. Here, we build on these results to construct universally optimal 1S-MLD for RM codes.

\textcolor{black}{\paragraph*{\textbf{Our contributions.}}
Leveraging our previous results characterizing the recovery sets of message symbols in Reed--Muller (RM) codes and their underlying geometric structure, we make the following contributions:
\begin{itemize}
    \item \textbf{Error decoding.}
    We construct the first one-step majority-logic decoder for RM codes that applies uniformly to all parameters \( r \) and \( m \).
    We prove that this decoder can correct up to \( d_{\min}/4 \) errors (Theorem~\ref{thm:optimal_error_decoder}).
    Moreover, by specializing a known general bound for linear codes to the RM setting, we establish that \( d_{\min}/4 \) is the maximum number of errors that any 1S-MLD can correct for RM codes, and show that this bound is tight (Proposition~\ref{prop:MLD_limit}).
    \item \textbf{Erasure decoding.}
    We further prove that the proposed decoder is optimal in the erasure setting: it can recover from any erasure pattern of weight up to \( d_{\min}-1 \), which is the theoretical maximum (Theorem~\ref{thm:optimal_erasure_decoder}).
    A key component of this result is a new combinatorial-geometric theorem that characterizes the minimum size of a transversal for the family of all complements of a fixed subspace in Euclidean geometry (Theorem~\ref{thm:optimal_transversal}).
\end{itemize}
}
Table~\ref{tab:MLD_compare} compares our proposed method with other MLD algorithms for RM codes. Our findings also highlight a deeper connection between error and erasure correction capabilities in RM codes, relating to and complementing recent insights from~\cite{RM_random_erasure_error,RM_efficient_decoding}.


This paper is organized as follows. Section~\ref{sec:nomenclature} introduces definitions and notations. Section~\ref{sec:preliminary} provides background on RM codes and MLD. Section~\ref{sec:recovery} reviews the structure of recovery sets for message symbols, based on recent works in~\cite{SRR_RM_ISIT,SRR:preprint/arxiv/LySL25}. Section~\ref{sec:RM_MLD} presents our 1S-MLD construction and proves its optimality for both error and erasure decoding. Section~\ref{sec:conclusion} concludes the paper.

\begin{table*}[t]
\caption{Comparisons of different MLD algorithms for $\mathrm{RM}(r, \, m)$ codes}
    \centering
    \begin{tabular}{ccc}
         Methods & Correcting performance & Characteristics \\
         \hline
         Reed's decoding~\cite{reed1954class}& $\lfloor\frac{d_{\text{min}}}{2}\rfloor$ errors; $d_{\text{min}}$-1 erasures & Works in $(r+1)$ steps, with all parameters $(r,\, m)$.\\
         \hline
         Chen's algorithm~\cite{RMdecoding:journals/tit/Chen71}& $\lfloor\frac{d_{\text{min}}}{2}\rfloor$ errors; $d_{\text{min}}$-1 erasures & Sequential in 2 steps, limited to $r \le m/2$ regimes. \\
         \hline
         Bertram et al.~\cite{RMdecoding:journals/tcom/BertramHH13} & $\lfloor\frac{d_{\text{min}}}{2}\rfloor$ errors; $d_{\text{min}}$-1 erasures & Sequential in 2 steps, limited to $r \le m/2$ regimes.  \\
         \hline
         Rahman and Blake~\cite{RMdecoding:journals/tit/RahmanI75} & No known general formula; & Works in 1 step, limited to 1st-, 2nd-, and 3rd-order RM codes.\\
        & Below  $\lfloor\frac{d_{\text{min}}}{2}\rfloor$ & Limited to the existence of $t$-designs (no designs with $t > 5$ are known.)\\
        \hline
        Our method & $\lfloor\frac{d_{\text{min}}}{4}\rfloor$ errors; $d_{\text{min}}$-1 erasures & Work in 1 step, with all parameters $(r,\, m)$.\\
        
    \end{tabular}
    \label{tab:MLD_compare}
\end{table*}

\section{Basic Definitions and Notations}
\label{sec:nomenclature}
Our notations are standard in coding theory, algebra, and finite geometry. \(\mathrm{RM}(r, m)\) stands for
the binary Reed--Muller code of order \(r\) and length \(n=2^m\). \(\mathbb{F}_q\) is the finite field over a prime or prime power \(q\). A linear \(q\)-ary code \(\mathcal{C}\) with parameters \([n, k, d]_q\), is a \(k\)-dimensional subspace of the \(n\)-dimensional vector space \(\mathbb{F}_q^n\) with the minimum distance $d$. We denote as $c_j$ the $j$-th column of the code's generator matrix, $1 \le j \le n$. Hamming weight of a codeword \(\boldsymbol{x}\) in \(\mathcal{C}\) is denoted as \(\text{wt}(\boldsymbol{x})\). 
The standard basis (column) vector with a one at position \(i\) and a zero elsewhere is \(\mathbf{e}_i\). The transpose of a vector $\boldsymbol{v}$ is $\boldsymbol{v}^{\top}$. \(\mathrm{Supp}(\boldsymbol{x})\) denotes the support of codeword \(\boldsymbol{x}\). The set of positive integers not exceeding \(i\) is denoted as \([i]\). 

In the \emph{Euclidean geometry} $\mathrm{EG}(m, 2)$ of dimension $m$ over $\mathbb{F}_2$, an \emph{$r$-flat} is defined as an affine subspace of dimension $r$, consisting of $2^r$ points. Formally, it is a coset of an 
$r$-dimensional linear subspace of $\mathbb{F}_2^m$, meaning it can be expressed as $v + V$, where $V$ is an $r$-dimensional subspace and $v \in \mathbb{F}_2^m$ is a shift vector. Flats generalize points, lines, and planes in the finite field setting. The Gaussian binomial coefficient $\gaussbinom{m}{r} = \prod\limits_{i=0}^{r-1}\dfrac{1-2^{m-i}}{1-2^{i+1}}$ counts the number of \( r \)-dimensional subspaces of \( \mathbb{F}_2^m \), i.e., the cardinality of the \emph{Grassmannian} \( \mathcal{G}_2(m, r) \). By convention, \( \gaussbinom{m}{0} := 1 \). The notation \( F \le G \) indicates that \( F \) is a subspace of \( G \), meaning \( F \subseteq G \) and both are linear subspaces of \( \mathbb{F}_2^m \). 


In the projective space \(\mathrm{PG}(n, q)\), a \(k\)-space is a \(k\)-dimensional projective subspace, generalizing the notions of points (\(0\)-spaces), lines (\(1\)-spaces), and planes (\(2\)-spaces). A set \( B \) of points in \( \mathrm{PG}(n, q) \) is called a transversal (or a \emph{blocking set}) with respect to \( k \)-spaces if every \( k \)-space of \( \mathrm{PG}(n, q) \) contains at least one point from \( B \), see, e.g., 
\cite{Blocking_subspaces}. A transversal $B$ is minimal if no proper subset of $B$ is itself a transversal.

The abbreviation 1S-MLD is used to refer to either the one-step majority-logic \emph{decoding procedure} or the associated \emph{decoder}. The intended meaning will be clear from the context.

\section{Reed--Muller Codes Preliminaries}\label{sec:preliminary}
We begin by introducing Reed–Muller (RM) codes, exploring their connection to Euclidean geometry, and reviewing the classical Reed decoding algorithm. This is followed by a brief overview of 1S-MLD. Finally, we present a fundamental limit on the number of errors that 1S-MLD can correct in RM codes. Throughout, we adopt standard notations and definitions from~\cite[Ch. 13]{Coding:books/MacWilliamsS77} and~\cite[Ch.~10]{Coding:books/PetersonW72}. This section lays the groundwork for establishing major results that follow.
\subsection{Reed--Muller Codes}
Let \( v_1, \dots, v_m \) be \( m \) binary variables, and \( \boldsymbol{v} = (v_1, \dots, v_m) \) binary \( m \)-tuples. Consider a Boolean function \( f(\boldsymbol{v}) = f(v_1, \dots, v_m) \). We denote by \( \boldsymbol{f} \) the vector of all evaluations of \( f \) on its $2^m$ possible arguments \( \boldsymbol{v} \).

\begin{definition}[see, e.g.,{\cite[Ch.~13]{Coding:books/MacWilliamsS77}}]
The \( r \)-th order binary Reed--Muller code \( \mathrm{RM}(r, m) \) of length \( n = 2^m \), for \( 0 \leq r \leq m \), consists of all vectors \( \boldsymbol{f} \) where \( f(\boldsymbol{v}) \) is a Boolean function that can be expressed as a polynomial of degree at most \( r \).
\label{def:RM_code}
\end{definition}

The Reed--Muller code is a linear code with length \( n = 2^m \), dimension \( k = \sum_{i=0}^r \binom{m}{i} \triangleq \binom{m}{\le r}\), and minimum distance \( d = 2^{m-r} \). Therefore, \( \mathrm{RM}(r, m) \) is characterized by the parameters \( [n, k, d] = [2^m, \binom{m}{\le r}, 2^{m-r}] \). When $m \ge r+1$, the dual code of $\mathrm{RM}(r, \,m)$ is $\mathrm{RM}(m-r-1,\, m)$~\cite{Coding:books/MacWilliamsS77}. In the sequel, we assume $m \ge r+1$, ensuring that the dual $\mathrm{RM}(m-r-1,\, m)$ is always well defined. Note that when $r = m$, the code is a trivial $[n, n, 1]_2$ code with zero redundancy, i.e., the entire ambient space. 



For example, the generator matrix for \( \mathrm{RM}(2, 4) \) is given by
\begin{equation}
\boldsymbol{G}_{\mathrm{RM}(2,4)} = 
\left[
\begin{smallmatrix}
\vrule width 0pt depth 1pt\\
\boldsymbol{1}\\
\boldsymbol{v}_4\\
\boldsymbol{v}_3\\
\boldsymbol{v}_2\\
\boldsymbol{v}_1\\
\boldsymbol{v}_3\boldsymbol{v}_4\\
\boldsymbol{v}_2\boldsymbol{v}_4\\
\boldsymbol{v}_1\boldsymbol{v}_4\\
\boldsymbol{v}_2\boldsymbol{v}_3\\
\boldsymbol{v}_1\boldsymbol{v}_3\\
\boldsymbol{v}_1\boldsymbol{v}_2
\vrule width 0pt depth 4pt
\end{smallmatrix}
\right]
= 
\left[
\begin{smallmatrix}
\vrule width 0pt depth 2pt\\
1 & 1 & 1 & 1 & 1 & 1 & 1 & 1 & 1 & 1 & 1 & 1 & 1 & 1 & 1 & 1\\
0 & 0 & 0 & 0 & 0 & 0 & 0 & 0 & 1 & 1 & 1 & 1 & 1 & 1 & 1 & 1\\
0 & 0 & 0 & 0 & 1 & 1 & 1 & 1 & 0 & 0 & 0 & 0 & 1 & 1 & 1 & 1\\
0 & 0 & 1 & 1 & 0 & 0 & 1 & 1 & 0 & 0 & 1 & 1 & 0 & 0 & 1 & 1\\
0 & 1 & 0 & 1 & 0 & 1 & 0 & 1 & 0 & 1 & 0 & 1 & 0 & 1 & 0 & 1\\
0 & 0 & 0 & 0 & 0 & 0 & 0 & 0 & 0 & 0 & 0 & 0 & 1 & 1 & 1 & 1\\
0 & 0 & 0 & 0 & 0 & 0 & 0 & 0 & 0 & 0 & 1 & 1 & 0 & 0 & 1 & 1\\
0 & 0 & 0 & 0 & 0 & 0 & 0 & 0 & 0 & 1 & 0 & 1 & 0 & 1 & 0 & 1\\
0 & 0 & 0 & 0 & 0 & 0 & 1 & 1 & 0 & 0 & 0 & 0 & 0 & 0 & 1 & 1\\
0 & 0 & 0 & 0 & 0 & 1 & 0 & 1 & 0 & 0 & 0 & 0 & 0 & 1 & 0 & 1\\
0 & 0 & 0 & 1 & 0 & 0 & 0 & 1 & 0 & 0 & 0 & 1 & 0 & 0 & 0 & 1
\vrule width 0pt depth 3pt
\end{smallmatrix}
\right],\label{eq:generator_matrix}
\end{equation}
in which $\boldsymbol{v}_i\boldsymbol{v}_j$ denotes the element-wise product of the row vectors $\boldsymbol{v}_i$ and $\boldsymbol{v}_j$. 

\subsection{A Geometric View}
Reed–Muller (RM) codes can be elegantly described using finite geometry, specifically \( \mathrm{EG}(m, 2) \), the Euclidean geometry of dimension \( m \) over \( \mathbb{F}_2 \). This space consists of \( 2^m \) points \( P_i \) for \( i = 1, \dots, 2^m \), each identified with a length-$m$ binary coordinate vector \( \boldsymbol{t} = (t_1, \dots, t_m) \in \mathbb{F}_2^m \). For any subset of points \( S \subseteq \mathrm{EG}(m, 2) \), its \textit{incidence vector} \( {\boldsymbol{\chi}}(S) \) is defined as
\[
\boldsymbol{\chi}(S)_j = 1 \text{ if } P_j \in S, \text{ and } \boldsymbol{\chi}(S)_j = 0 \text{ otherwise},
\]
where \( P_j \) denotes the \( j \)-th point. 
Codewords of \( \mathrm{RM}(r, m) \) correspond to incidence vectors of specific subsets in \( \mathrm{EG}(m, 2) \). For example, in \( \mathrm{EG}(4, 2) \), subset \( S = \{P_5, P_6, P_7, P_8, P_{13}, P_{14}, P_{15}, P_{16}\} \) has incidence vector \( \boldsymbol{\chi}(S) = 0000111100001111 \), a codeword of \( \mathrm{RM}(2, 4) \). The numbering of points \( P_1,\dots, P_{16} \) is assumed to follow the order in Table~\ref{table:points}, where \( P_1 \) represents the origin (point with all coordinates equal zero).

For each binary vector \( \boldsymbol{x} = [x_1, \dots, x_{2^m}] \in \mathbb{F}_2^{2^m} \), the coordinate \( x_j \) corresponds to the point \( P_j \). That is, \( \boldsymbol{x} \) serves as the incidence vector of a subset \( S(\boldsymbol{x}) \subseteq \mathrm{EG}(m, 2) \) (i.e., $\boldsymbol{x} = \boldsymbol{\chi}(S(\boldsymbol{x}))$), where \( x_j = 1 \) indicates that \( P_j \in  S(\boldsymbol{x}) \). The size of this subset equals the Hamming weight \( \text{wt}(\boldsymbol{x}) \). This notation enables the interpretation of Reed--Muller codewords as incidence vectors over the point set of the Euclidean geometry, highlighting the connection between RM codes and geometry, as seen in the following theorem.
\begin{theorem}[see, e.g.,{~\cite[Ch.~13, Thm.~5, 8, 12]{Coding:books/MacWilliamsS77}}]
Let $f$ be a minimum weight codeword of \( \mathrm{RM}(r, m) \), say $f = \boldsymbol{\chi}(S)$. Then $S$ is an $(m-r)$-dimensional flat in $\mathrm{EG}(m, 2)$. Equivalently, the minimum-weight codewords of \( \mathrm{RM}(r, m) \) are precisely the incidence vectors of the \((m - r)\)-flats in \( \mathrm{EG}(m, 2) \).

The set of incidence vectors of all \((m - r)\)-flats in \( \mathrm{EG}(m, 2) \) generates the Reed--Muller code \( \mathrm{RM}(r, m) \).
\label{thm:MinWeightRM}
\end{theorem}
\begin{table*}[t]
\caption{$2^4 = 16$ Points in $\mathrm{EG}(4, \,2)$ with their coordinate vectors.} 
\label{table:points} 
\begin{center}
\vspace{-4mm}
\begin{tabular}{c c c c c c c c c c c c c c c c c}
&   $P_1$ & $P_2$ & $P_3$ & $P_4$ & $P_5$ & $P_6$ & $P_7$& $P_8$ & $P_9$ & $P_{10}$ & $P_{11}$ & $P_{12}$ & $P_{13}$ & $P_{14}$ & $P_{15}$ & $P_{16}$\\
 $\boldsymbol{v}_4,\, t_1$ &  0 & 0 & 0 & 0 & 0 & 0 & 0 & 0 & 1 & 1 & 1 & 1 & 1 & 1 & 1 & 1\\
 $\boldsymbol{v}_3, \,t_2$ & 0 & 0 & 0 & 0 & 1 & 1 & 1 & 1 & 0 & 0 & 0 & 0 & 1 & 1 & 1 & 1 \\
 $\boldsymbol{v}_2, \,t_3$ & 0 & 0 & 1 & 1 & 0 & 0 & 1 & 1 & 0 & 0 & 1 & 1 & 0 & 0 & 1 & 1 \\
 $\boldsymbol{v}_1, \,t_4$ & 0 & 1 & 0 & 1 & 0 & 1 & 0 & 1 & 0 & 1 & 0 & 1 & 0 & 1 & 0 & 1 \\
\end{tabular}
\end{center}
\vspace{-5mm}
\end{table*}
\subsection{Reed's Decoding Algorithm}
A well-known  MLD algorithm for Reed–Muller (RM) codes is the \emph{Reed decoding algorithm} proposed by I. Reed~\cite{reed1954class}. We illustrate its operation using the second-order RM code \( \mathrm{RM}(2, 4) \), which has parameters \([16, 11, 4]\).

The generator matrix of \( \mathrm{RM}(2, 4) \) encodes the message
\[
    \boldsymbol{a} = [\underbrace{a_0}_{0^{\text{th}}\text{-}}, \underbrace{a_4, \ a_3, \ a_2, \ a_1,}_{1^{\text{st}}\text{-}} \underbrace{a_{34}, \ a_{24}, \ a_{14}, \ a_{23}, \ a_{13}, \ a_{12}}_{2^{\text{nd}} \text{-order symbols}}]
\]
into the codeword
\begin{align}
\boldsymbol{x} &= \boldsymbol{a} \cdot \boldsymbol{G}
= a_0 \boldsymbol{1} + a_4 \boldsymbol{v}_4 + \dots + a_1 \boldsymbol{v}_1 \notag \\
&\quad + a_{34} \boldsymbol{v}_3 \boldsymbol{v}_4 + \dots + a_{12} \boldsymbol{v}_1 \boldsymbol{v}_2
:= [x_1,\, x_2,\, \dots,\, x_{16}],
\label{eq:message_to_codeword}
\end{align}
where \( \boldsymbol{v}_i \in \mathbb{F}_2^{16} \) is defined in~\eqref{eq:generator_matrix}. To begin decoding, the receiver first recovers the six quadratic terms \( a_{34}, a_{24}, a_{14}, a_{23}, a_{13}, a_{12} \). Consider \( a_{12} \), which is the last message symbol (i.e., at position 11) in the message vector. We have the following:
\begin{align}
a_{12} &= \boldsymbol{a} \cdot \mathbf{e}_{11} = \boldsymbol{a}(c_1 + c_2 + c_3 + c_4) \notag \\
&= \boldsymbol{a}(c_5 + c_6 + c_7 + c_8) = \boldsymbol{a}(c_9 + c_{10} + c_{11} + c_{12}) \notag \\
&= \boldsymbol{a}(c_{13} + c_{14} + c_{15} + c_{16}) \label{eq:message_to_vector} \\
&= x_1 + x_2 + x_3 + x_4 = x_5 + x_6 + x_7 + x_8 \notag \\
&= x_9 + x_{10} + x_{11} + x_{12} = x_{13} + x_{14} + x_{15} + x_{16},
\label{eq:majority_vote}
\end{align}
where $c_j, 1 \le j \le 16$ are the columns of matrix $\boldsymbol{G}$. The four equalities in~\eqref{eq:majority_vote} yield four “votes” for \( a_{12} \). Equivalently,
\begin{align}
&\{x_1, x_2, x_3, x_4\},\quad\quad
\{x_5, x_6, x_7, x_8\},\notag\\
&\{x_9, x_{10}, x_{11}, x_{12}\},\quad
\{x_{13}, x_{14}, x_{15}, x_{16}\}
\label{eq:recovery_sets}
\end{align}
form four \emph{recovery sets} for the message symbol \( a_{12} \), where the sum of elements in each set equals \( a_{12} \). Among these sets, each codeword symbol \( x_i \) appears in exactly one vote. Formally, a recovery set for a message symbol is a set of codeword symbols that satisfies: (a) the sum of them equals the target message symbol, and (b) the set is \emph{minimal}, in the sense that no proper subset provides the same recovery. For example, while the set \( \{x_i : 1 \le i \le 12\} \) also yields \( a_{12} \) (since \( 3a_{12} = a_{12} \) in \( \mathbb{F}_2 \)), it is not minimal and can be inferred from the equalities in~\eqref{eq:majority_vote}; hence, it does not provide additional useful information for decoding.

Let $\boldsymbol{y} = \boldsymbol{x} + \boldsymbol{e}$ be the received vector, where $\boldsymbol{e} \in \mathbb{F}_2^{16}$ is a binary noise vector. If at most one error occurs, the recovery-set structure ensures that the error corrupts at most one of the four votes. For instance, if $y_1$ is flipped while all other symbols are correct, only the estimate $y_1 + y_2 + y_3 + y_4$ disagrees with the remaining three.
A majority vote--that is, selecting the value that appears the most frequently, with $0$ favored in the event of a tie--correctly recovers the transmitted symbol \( a_{12} \). The same procedure applies to other quadratic terms \( a_{13}, a_{14}, a_{23}, a_{24}, a_{34} \), each associated with its own collection of recovery sets. This decoding method is known as \emph{majority-logic decoding} (MLD). It is important to note that decoding does not require listing all recovery sets; instead, one may select a subset of them to perform the majority vote. The particular choice of recovery sets used \emph{defines the decoder}.

To formalize the symbol ordering in this decoding algorithm, we define for each \( \ell \in \{0, 1, \dots, r\} \) a length-\( \ell \) tuple \( \sigma^{\ell} = \sigma_1 \sigma_2 \dots \sigma_{\ell} \), where the indices are drawn without replacement from the set \( \{1, 2, \dots, m\} \) and satisfy \( \sigma_1 < \sigma_2 < \dots < \sigma_{\ell} \). These tuples index the degree-\( \ell \) message symbols. For instance, the tuples \( 12, 13, 14, 23, 24, 34 \) index the second-degree tuples for \( m = 4 \). When \( \ell = 0 \), we define the special tuple \( \sigma^0 = (0) \), and denote \( a_{\sigma^0} := a_0 \). Hence, in our earlier example we have
\begin{itemize}
    \item Second-order symbols: \( a_{12}, a_{13}, a_{23}, a_{14}, a_{24}, a_{34} \).
    \item First-order symbols: \( a_1, a_2, a_3, a_4 \).
    \item Zero-order symbol: \( a_0 \).
\end{itemize}

Concretely, there are exactly \( \binom{m}{\ell} \) symbols of order \( \ell \). 
Once the quadratic terms in~\eqref{eq:message_to_codeword} are recovered, their contributions are subtracted from the received codeword, reducing the problem to effectively decoding a first-order RM code. The decoder then proceeds iteratively to recover the remaining terms. This full sequential procedure is known as \emph{Reed's decoding}~\cite{reed1954class}. 

Reed's decoding method is an optimal decoder which minimizes the \textit{symbol error probability}, though not necessarily the word error probability. For this reason, although it is not a maximum-likelihood decoding algorithm, it aligns well with data recovery problems, where the goal is often to retrieve individual message symbols~\cite{SRR_RM_ISIT,SRR:preprint/arxiv/LySL25}.
\subsection{1-step Majority-logic Decoding}
When all message symbols of a codeword can be recovered simultaneously, decoding can be performed in a single step, unlike the sequential, multi-stage process used in Reed's algorithm. For example, consider the first-order RM code \( \mathrm{RM}(1, 2) \), whose generator matrix is
\[
\boldsymbol{G}_{\mathrm{RM}(1, 2)} =
\begin{bmatrix}
1 & 1 & 1 & 1\\
0 & 0 & 1 & 1\\
0 & 1 & 0 & 1
\end{bmatrix}
\]
which encodes a message vector \( \boldsymbol{a} := [a_0,\, a_2,\, a_1] \) into the codeword \( \boldsymbol{x} = \boldsymbol{a} \cdot \boldsymbol{G} := [x_1,\, x_2,\, x_3,\, x_4] \). The message symbols can each be recovered using two different votes
\begin{align}
a_0 &= \boldsymbol{a} \cdot \mathbf{e}_1 = \boldsymbol{a}\cdot c_1 = x_1 = x_2 + x_3 + x_4 \notag \\
a_2 &= \boldsymbol{a} \cdot \mathbf{e}_2 = x_1 + x_3 = x_2 + x_4 \notag \\
a_1 &= \boldsymbol{a} \cdot \mathbf{e}_3 = x_1 + x_2 = x_3 + x_4. 
\label{eq:rm_first_order}
\end{align}

Therefore, upon receiving a possibly corrupted codeword \( \boldsymbol{y} \), one can apply majority voting using the paired expressions in~\eqref{eq:rm_first_order} to recover all three message symbols \( a_0, a_1, a_2 \) \emph{simultaneously}. This parallel recovery process is known as \emph{1-step majority-logic decoding} (1S-MLD)~\cite{Coding:books/PetersonW72}. Since errors in received codeword symbols affect all recovery sets (or votes) that contain them, the effectiveness of 1S-MLD critically depends on the overlap structure among these sets. Often, one seeks to have as many disjoint recovery sets as possible to ensure that each error corrupts at most one vote. For majority voting to succeed, the number of uncorrupted votes must exceed the number of corrupted ones under the given error pattern. Hence, the error-correction capability of 1S-MLD is fundamentally determined by the maximum number of such recovery sets and their constitution. The following result, which applies to all linear codes, establishes an upper bound on the number of errors that an 1S-MLD can correct.


\begin{theorem}[see, e.g.,~{\cite[Thm.~10.2]{Coding:books/PetersonW72}}]
Let \( d^{\perp} \) denote the minimum distance of the dual code of an \( (n, k) \) linear code. Then the number of errors that can be corrected by 1S-MLD satisfies
\[
t \le \frac{n - 1}{2(d^{\perp} - 1)}.
\]
\label{thm:MJDecoding}
\end{theorem}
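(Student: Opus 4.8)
The plan is to cast one-step majority-logic decoding in its orthogonal-check-sum form and reduce the claim to a counting argument over the supports of dual codewords. Fix a coordinate $i \in [n]$. A 1S-MLD estimates the error $e_i$ at position $i$ by a single majority vote over a family of parity-check sums, each arising from a codeword of the dual code. Such a family $h_1, \dots, h_J$ is \emph{orthogonal} on $i$ if every $h_\ell$ has $i$ in its support while every other coordinate lies in the support of at most one $h_\ell$. Orthogonality guarantees that each transmission error falsifies at most one of the $J$ check sums, so the standard analysis of the vote shows that $e_i$ is recovered correctly whenever at most $\lfloor J/2 \rfloor$ errors occur. Hence the number of errors such a decoder corrects at coordinate $i$ equals $\lfloor J/2\rfloor$, so $t \le J/2$.

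First I would bound $J$ from above. Each $h_\ell$ is a nonzero dual codeword, so $\textsf{w}(h_\ell) \ge d^{\perp}$; deleting the common coordinate $i$ leaves at least $d^{\perp}-1$ nonzero positions in $\Supp(h_\ell) \setminus \{i\}$. By orthogonality these $J$ trimmed supports are pairwise disjoint, and all of them sit inside the $n-1$ coordinates different from $i$. A direct count then yields $J\,(d^{\perp}-1) \le n-1$, i.e.\ $J \le (n-1)/(d^{\perp}-1)$. Combining this with $t \le J/2$ gives $2t(d^{\perp}-1) \le n-1$, which is precisely the asserted inequality $t \le (n-1)/\bigl(2(d^{\perp}-1)\bigr)$. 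Because the same bound on $J$ holds verbatim at every coordinate, the per-coordinate argument immediately upgrades to a uniform limit for recovering the entire word.

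The hard part will be the operational half, namely pinning down the decoding-correctness equivalence $t = \lfloor J/2 \rfloor$ so that the reduction $t \le J/2$ is rigorous rather than merely plausible. This requires checking the two error configurations together---$e_i = 0$, where up to $t$ stray errors may each flip a distinct check sum, and $e_i = 1$, where up to $t-1$ stray errors may each clear one---and verifying that the majority rule (with ties resolved toward the smaller count) survives exactly up to $\lfloor J/2\rfloor$ errors. By contrast, the counting inequality is routine once orthogonality is invoked; the entire subtlety lies in tying the combinatorial quantity $J$ to the operational guarantee of correcting $t$ errors, after which the bound follows immediately.
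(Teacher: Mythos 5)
The paper never proves this theorem; it is quoted from Peterson--Weldon, so the only benchmark is the classical proof in that reference, and your argument is essentially that proof: majority voting over \(J\) check sums orthogonal on coordinate \(i\) corrects exactly \(\lfloor J/2\rfloor\) errors, and since the trimmed supports \(\Supp(h_\ell)\setminus\{i\}\) are pairwise disjoint sets of size at least \(d^{\perp}-1\) inside the remaining \(n-1\) coordinates, \(J(d^{\perp}-1)\le n-1\), which combines to give the bound. The converse step you defer (\(t\le\lfloor J/2\rfloor\)) closes in one line: place one error in each of \(\lfloor J/2\rfloor+1\) of the disjoint trimmed supports (each is nonempty); with \(e_i=0\) this falsifies at least \(\lfloor J/2\rfloor+1\) of the \(J\) sums, so the majority vote misdecodes, exactly the two-configuration check you anticipated.

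One scope caveat worth fixing: your proof, like the textbook's, \emph{assumes} the vote is taken over an orthogonal family, whereas this paper invokes the theorem as a ceiling on \emph{any} one-step majority decoder---including votes over overlapping recovery sets, which is precisely what the paper's own decoder uses (each off-kernel coordinate lies in \(\lambda>1\) recovery sets). Your counting breaks there, because the trimmed supports are no longer disjoint and a single error corrupts several votes. The repair follows the same skeleton with disjointness replaced by double counting: if every coordinate \(j\ne i\) lies in at most \(\lambda\) of the \(J\) sums, then \(J(d^{\perp}-1)\le \sum_{j\ne i}\mu_j\le \lambda(n-1)\), while the majority vote can only tolerate on the order of \(J/(2\lambda)\) errors (an adversary places each error on a coordinate of maximal multiplicity, one per not-yet-corrupted sum); the \(\lambda\) cancels and the same bound \(t\le (n-1)/\bigl(2(d^{\perp}-1)\bigr)\) emerges. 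Without this extension, your proof establishes the theorem only for the orthogonal-check-sum formulation of 1S-MLD, not in the generality in which the paper uses it.
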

Theorem~\ref{thm:MJDecoding} yields the following consequence for Reed–Muller codes under 1S-MLD.
\begin{proposition}[Universal limitation for 1S-MLD on RM codes]
\label{prop:MLD_limit}
Let \( \mathrm{RM}(r,m) \) be a RM code with \(n=2^m\), \(d_{\min}=2^{m-r}\), and dual distance \(d^\perp = 2^{r+1}\).
Suppose a 1S-MLD is required to work uniformly for all parameters
\((r,m)\) with \(m\ge r+1\), and to correct any error pattern of weight at most \(t\). Then
\[
t \le \frac{d_{\min}}{4}.
\]
In particular, if a 1S-MLD guarantees correction of \(d_{\min}/4+\delta\) errors for some
\(\delta>0\), then it can only do so for a restricted set of parameter pairs \((r,m)\).
\end{proposition}

\begin{proof}
By Theorem~\ref{thm:MJDecoding}, any 1S-MLD for an \((n,k)\) linear code satisfies
\[
t \le \frac{n-1}{2(d^\perp-1)}.
\]
For \( \mathrm{RM}(r,m) \), this yields the exact bound
\begin{equation}\label{eq:exact-1s-bound}
t \le \frac{2^m-1}{2(2^{r+1}-1)}.
\end{equation}

Assume for contradiction that there exists a fixed \(\delta>0\) and a 1S-MLD that works for all \((r,m)\) (with \(m\ge r+1\)) and corrects
\begin{equation}\label{eq:assumption-better-than-quarter}
t \ge \frac{d_{\min}}{4}+\delta = 2^{m-r-2}+\delta.
\end{equation}
Combining \eqref{eq:assumption-better-than-quarter} with \eqref{eq:exact-1s-bound} gives, for all such \((r,m)\),
\[
\delta \le \frac{2^m-1}{2(2^{r+1}-1)} - 2^{m-r-2}
= \frac{2^{m-r-2}-\tfrac12}{2^{r+1}-1}.
\]
Now choose the valid family \(m=r+1\). Then \(d_{\min}=2\) and the right-hand side equals
\[
\frac{2^{(r+1)-r-2}-\tfrac12}{2^{r+1}-1}
= \frac{2^{-1}-\tfrac12}{2^{r+1}-1}
=0,
\]
forcing \(\delta\le 0\), a contradiction. Therefore, no 1S-MLD can guarantee correction beyond \(d_{\min}/4\) uniformly over all \((r,m)\).
\end{proof}

\begin{remark}
\textcolor{black}{Proposition~\ref{prop:MLD_limit} establishes an inherent limitation on the number of errors correctable by a 1S-MLD that works \emph{uniformly for all} Reed--Muller codes, irrespective
of the parameters \((r,m)\). Thus, any 1S-MLD that corrects more than \(d_{\min}/4\) errors cannot operate universally over all parameters \((r,m)\) with \(r\le m-1\). Conversely, the condition \(t \le d_{\min}/4\) is necessary but not sufficient for the existence of such a universal 1S-MLD.}

For specific RM codes—particularly those of low order—the general bound
\( \frac{n - 1}{2(d^{\perp} - 1)} \) may exceed \( d_{\min}/4 \), permitting decoding performance beyond the universal quarter-distance limit.
Accordingly, there exist 1S-MLD schemes that correct more than \( d_{\min}/4 \) errors for such codes; see, for example,~\cite{RMdecoding:journals/tit/RahmanI75}.
An more obvious, prominent instance is provided by Hadamard codes~\cite{eczoo_hadamard}, which coincide with first-order Reed--Muller codes.
\end{remark}
\section{Recovery Sets of All Message Symbols}\label{sec:recovery}
As demonstrated in the example with \( \mathrm{RM}(1, 2) \), designing an 1S-MLD for Reed–Muller codes involves identifying recovery sets for all message symbols simultaneously. This is different from Reed’s original multi-step decoding algorithm, where in each step, recovery sets are constructed only for message symbols of the current highest degree. 
Recent landmark results by the authors of~\cite{SRR_RM_ISIT,SRR:preprint/arxiv/LySL25} provide a complete characterization of such recovery sets of all message symbols for Reed--Muller codes. In this section, we briefly recall the key components of that result, beginning with the following theorem that underlies the Reed decoding algorithm; see, for example,{~\cite[Ch.~13, Thm.~14]{Coding:books/MacWilliamsS77}}.

\begin{theorem}[Reed's decoding algorithm]\label{thm:ReedDecoding} In \( \mathrm{RM}(r, m) \), each message symbol of order $r$ (the highest order) \( a_{\sigma^r} \) can be determined by partitioning the \( 2^m \) coordinates of the codeword \( \boldsymbol{x} = \boldsymbol{a} \cdot \mathbf{G} \) into \( 2^{m-r} \) pairwise disjoint subsets of size \( 2^r \), where the sum of the coordinates within each subset equals \( a_{\sigma^r} \). Each such set is a recovery set for $a_{\sigma^r}$.
\end{theorem}

This theorem generalizes the equalities shown in~\eqref{eq:majority_vote}. Geometrically, the coordinates constituting each recovery set correspond to the points of an \( r \)-flat in \( \mathrm{EG}(m, 2) \). Specifically, if we denote the recovery sets as $S_1, S_2, \dots, S_{2^{m-r}}$, then for each $i$, the point set $\{P_j \mid \, j \in S_i\}$ forms an $r$-flat. The flat containing the origin \( P_1 \) is an \( r \)-dimensional linear subspace, while the remaining \( 2^{m - r} - 1 \) flats are its \emph{affine translations}. For example, in \( \mathrm{RM}(2, 4) \), the four recovery sets in~\eqref{eq:recovery_sets} correspond to the $2$-flats  \( \{P_1, P_2, P_3, P_4\} \), \( \{P_5, P_6, P_7, P_8\} \), \( \{P_9, P_{10}, P_{11}, P_{12}\} \), and \( \{P_{13}, P_{14}, P_{15}, P_{16}\} \). The first set, which contains the origin $P_1$, is a $2$-dimensional linear subspace, and the others are its affine translations. Note that they are pairwise disjoint.

The following three theorems provide a deeper understanding of the recovery sets for message symbols of arbitrary order \( \ell \in \{0, 1, \dots, r\} \), and in fact generalize Theorem~\ref{thm:ReedDecoding}. Informally, the authors in~\cite{SRR_RM_ISIT,SRR:preprint/arxiv/LySL25} showed that for each message symbol \( a_{\sigma^{\ell}} \) of order \( \ell \), the following properties hold:

    
\begin{itemize}
    \item There exists a unique recovery set $S$ of size \( 2^{\ell} \). The points $\{P_j \mid j \in S\}$ form a specific \( \ell \)-dimensional linear subspace \( \mathscr{S} \subset \mathrm{EG}(m, 2) \) whose construction is explicitly known (Theorem~\ref{thm:RecoverySet}).
    
    \item All other recovery sets for \( a_{\sigma^{\ell}} \) have size at least \( 2^{r+1} - 2^{\ell} \) (Theorem~\ref{thm:RecoverySetSize}). Among these, exactly \( \gaussbinom{m - \ell}{r + 1 - \ell} \) attain the minimum size of \( 2^{r+1} - 2^{\ell} \). For each such recovery set \( T \), the points \( \{ P_j \mid j \in T \} \) forms a complement of \( \mathscr{S} \) within a distinct \( (r+1) \)-dimensional linear subspace containing \( \mathscr{S} \). This establishes a one-to-one correspondence between these subspace complements and the recovery sets. Furthermore, these sets form a \emph{combinatorial $t$-design} with $t=1$: each point \( P_j \in \mathrm{EG}(m, 2) \setminus \mathscr{S} \) is included in exactly \( \gaussbinom{m - \ell - 1}{r - \ell} \) such complements (Theorem~\ref{thm:RecoverySetCount}). (Recall that a \( t\text{--}(n,k,\lambda) \) block design (or a \( t \)-design) is a combinatorial structure consisting of a set \( V \) of \( n \) elements (called \emph{points}) and a collection of $k$-element subsets of \( V \) (called \emph{blocks}), such that every \( t \)-subset of \( V \) is contained exactly in \( \lambda \) blocks.)
\end{itemize}


\begin{theorem}[\hspace{-0.1mm}{\cite[Thm.~4]{SRR_RM_ISIT,SRR:preprint/arxiv/LySL25}}]
For any integer \( \ell \) with \( 1 \leq \ell \leq r \), the symbol \( a_{\sigma^{\ell}} \) can be recovered by summing the codeword coordinates indexed by a specific coordinate subset \( S \subseteq [2^m] \). This set satisfies:

\vspace{2mm}
\(
 \quad   S \ni 1, \quad |S| = 2^{\ell},\,\, \text{ and }\,\, \sum\limits_{j \in S} x_j = a_{\sigma^{\ell}}. 
    \vspace{2mm}
\)

We refer to \( S \) as a \textit{recovery set} for \( a_{\sigma^{\ell}} \). Geometrically, the points \( \{ P_j \mid j \in S \} \) form an \( \ell \)-dimensional linear subspace \( \mathscr{S} \subset \mathrm{EG}(m, 2) \) (or, an $\ell$-flat passing through the origin $P_1$).
\label{thm:RecoverySet}
\end{theorem}

\begin{theorem}[\hspace{-0.1mm}{\cite[Thm.~5]{SRR_RM_ISIT,SRR:preprint/arxiv/LySL25}}]\label{thm:RecoverySetSize}
    If \( {\ell} < r \), then \( S \) is the \textbf{unique} recovery set for \( a_{\sigma^{\ell}} \) with size less than \( 2^r \). Any other recovery set must have a size of at least \( 2^{r+1} - |S| = 2^{r+1} - 2^{\ell} \). 
    
    In the case \( {\ell} = r \), the set \( S \) has size \( 2^r \), which matches the size of all other recovery sets for \( a_{\sigma^r} \).
\end{theorem}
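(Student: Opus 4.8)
The plan is to translate the notion of a recovery set into a condition on the columns $c_j$ of $\boldsymbol{G}$ and then exploit the fact that the dual code $\mathrm{RM}(r,m)^{\perp}=\mathrm{RM}(m-r-1,m)$ has minimum distance $d^{\perp}=2^{r+1}$. First I would record the basic reformulation: since $\boldsymbol{x}=\boldsymbol{a}\cdot\boldsymbol{G}$ gives $x_j=\boldsymbol{a}\cdot c_j$, a coordinate set $T\subseteq[2^m]$ satisfies $\sum_{j\in T}x_j=a_{\sigma^{\ell}}$ for \emph{every} message $\boldsymbol{a}$ if and only if $\sum_{j\in T}c_j=\mathbf{e}_{\sigma^{\ell}}$. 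Thus every recovery set is a (minimal) set $T$ with $\sum_{j\in T}c_j=\mathbf{e}_{\sigma^{\ell}}$, and the set $S$ supplied by Theorem~\ref{thm:RecoverySet} is one such set of size $2^{\ell}$.

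The crucial observation is that any two such sets differ by a dual codeword. If $T$ also satisfies $\sum_{j\in T}c_j=\mathbf{e}_{\sigma^{\ell}}$, then over $\mathbb{F}_2$ we have $\sum_{j\in S\triangle T}c_j=\sum_{j\in S}c_j+\sum_{j\in T}c_j=\mathbf{0}$, which says precisely that the incidence vector $\boldsymbol{\chi}(S\triangle T)$ lies in the right kernel of $\boldsymbol{G}$, i.e. in $\mathrm{RM}(m-r-1,m)$. When $T\neq S$ this codeword is nonzero, so its Hamming weight $|S\triangle T|$ is at least $d^{\perp}=2^{r+1}$.

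From here the size bound is immediate arithmetic. Using $|S\triangle T|=|S|+|T|-2|S\cap T|\le|S|+|T|$, I obtain $|T|\ge|S\triangle T|-|S|\ge 2^{r+1}-2^{\ell}$ for every recovery set $T\neq S$, which is exactly the asserted lower bound. For $\ell<r$ one has $2^{r+1}-2^{\ell}>2^{r+1}-2^{r}=2^{r}$, while $|S|=2^{\ell}<2^{r}$; hence $S$ is the unique recovery set of size below $2^{r}$ and every other recovery set has size strictly larger than $2^{r}$. For $\ell=r$ the same computation yields $|T|\ge 2^{r+1}-2^{r}=2^{r}=|S|$, so no recovery set is smaller than $S$, and Theorem~\ref{thm:ReedDecoding} exhibits the $2^{m-r}$ flats ($S$ together with its affine translates), each of size exactly $2^{r}$, realizing this minimum.

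The main obstacle—indeed the only nonroutine step—is the identification of $\boldsymbol{\chi}(S\triangle T)$ as a nonzero dual codeword and the invocation of $d^{\perp}=2^{r+1}$; everything afterward is elementary. I would also treat the $\ell=r$ case with care, since the lower bound alone only gives $|T|\ge 2^{r}$: there can exist minimal recovery sets strictly larger than $2^{r}$, so I read the $\ell=r$ assertion as identifying $2^{r}$ as the common \emph{minimum} recovery-set size, attained by $S$ and shared with its sibling flats, in contrast to the $\ell<r$ regime where $S$ is the strictly unique smallest recovery set.
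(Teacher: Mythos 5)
Your argument is correct, but note a structural point first: this paper never proves Theorem~\ref{thm:RecoverySetSize} at all---it is imported verbatim from the companion preprint~\cite{SRR:preprint/arxiv/LySL25}---so there is no in-paper proof to compare against. That said, your route---recasting a recovery set as a coordinate set $T$ whose column sum $\sum_{j\in T}c_j$ equals the standard basis vector selecting $a_{\sigma^{\ell}}$, observing that $\boldsymbol{\chi}(S\triangle T)$ then lies in the dual code $\mathrm{RM}(m-r-1,m)$, and invoking $d^{\perp}=2^{r+1}$ to get $|T|\ge|S\triangle T|-|S|\ge 2^{r+1}-2^{\ell}$---is the natural proof and is exactly consistent with the geometric packaging this paper attaches to the result: when $T$ attains the bound with $T\cap S=\emptyset$, the set $S\triangle T=S\cup T$ is a minimum-weight dual codeword, hence (by Theorem~\ref{thm:MinWeightRM} applied to the dual) an $(r+1)$-flat, indeed an $(r+1)$-dimensional subspace containing $\mathscr{S}$---precisely the ``flower'' structure of Theorem~\ref{thm:RecoverySetCount} and Box~\ref{box:recovery}. (Characterizing \emph{which} sets attain the bound would require additionally ruling out $S\cap T\neq\emptyset$ at minimum size, but that belongs to Theorem~\ref{thm:RecoverySetCount}, not to the statement you were asked to prove.)

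Your caveat about the $\ell=r$ clause is justified, not pedantic: under this paper's definition of a (minimal) recovery set, the literal reading ``all other recovery sets have size equal to $2^{r}$'' is false. Concretely, in $\mathrm{RM}(1,3)$ with $S$ the pair of coordinates indexed by the points $\{000,001\}$, the four coordinates indexed by $\{000,100,010,111\}$ form a minimal recovery set for the same degree-$1$ symbol: no pair among them sums to $001$, parity rules out odd subsets, and its symmetric difference with $S$, indexed by $\{001,010,100,111\}$, is the odd-weight affine hyperplane---a weight-$4$ codeword of the self-dual $\mathrm{RM}(1,3)$---so this recovery set has size $4=2^{r+1}>2^{r}$. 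Hence your reading, namely that $2^{r}$ is the common \emph{minimum} size, attained by $S$ together with its $2^{m-r}-1$ disjoint affine translates from Theorem~\ref{thm:ReedDecoding}, is the only defensible one, and your proof establishes exactly that.
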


\begin{theorem}[\hspace{-0.1mm}{\cite[Thm.~6]{SRR_RM_ISIT,SRR:preprint/arxiv/LySL25}}]\label{thm:RecoverySetCount}
     For each symbol \( a_{\sigma^{\ell}} \), there are exactly \( \gaussbinom{m - {\ell}}{r + 1 - {\ell}} \) recovery sets of size \( 2^{r+1} - 2^{\ell} \). Each such set corresponds to the complement of \( \mathscr{S} \) within a distinct \( (r+1) \)-dimensional linear subspace of \( \mathrm{EG}(m, 2) \) containing \( \mathscr{S} \) (and is therefore disjoint from $S$).
    
     Furthermore, these recovery sets are distributed evenly across the remaining coordinates: every coordinate \( x_j \) with \( j \notin S \) appears exactly in \( \gaussbinom{m - {\ell} - 1}{r - {\ell}} \) of these sets, forming a 1-design (i.e., a $t$-design with $t=1$) over the set $[n]\setminus S$.
\end{theorem}
\begin{remark}
    The last three theorems generalize Theorem~\ref{thm:ReedDecoding}, which appears as the special case \(\ell = r\). In this case,  
\(
\gaussbinom{m-\ell}{\,r+1-\ell\,} = \gaussbinom{m-r}{1} = 2^{m-r} - 1,
\)
which equals the number of flats obtained as affine translations of the original \( r \)-dimensional subspace. Moreover, each coordinate lies in exactly  
\(
\gaussbinom{m-r-1}{0} = 1
\)
of these recovery sets, meaning the sets are pairwise disjoint.
\end{remark}

We conclude this section with an example that illustrates the identification and enumeration of recovery sets in $\mathrm{RM}(2, \,4)$. 
\begin{exmp}\label{ex:RM24}
    Consider the Reed--Muller code \( \mathrm{RM}(2, 4) \). This example aims to identify recovery sets for symbol \( a_1 \). First, observe that the symbol \( a_1 \) is given by
    \[
    a_1 = \boldsymbol{a} \cdot \mathbf{e}_5 = \boldsymbol{a}(c_1 +c_2) = x_1 + x_2,
    \]
    indicating that \( S = \{1, 2\} \) is a recovery set for \( a_1 \) with size 2. Consequently, $\mathscr{S} = \{P_1, P_2\}$ is a $1$-dimensional subspace in $\mathrm{EG}(m, 2)$.
    Additionally, \( a_1 \) admits multiple representations as a linear combination of other coordinates (constructions of these sets are given in the proofs of Theorems~4-6 in~\cite{SRR:preprint/arxiv/LySL25})
    \begin{alignat*}{2}
        a_1 & = x_1 + x_2, && \quad (S)   \\
        & = x_3 + x_4 + x_5 + x_6 + x_7 + x_8, && \quad (S_1)   \\
        & = x_3 + x_4 + x_9 + x_{10} + x_{11} + x_{12}, && \quad (S_2)   \\
            &= x_5 + x_6 + x_9 + x_{10} + x_{13} + x_{14}, && \quad (S_3)   \\
            & = x_5 + x_6 + x_{11} + x_{12} + x_{15} + x_{16}, && \quad (S_4)   \\
            &= x_3 + x_4 + x_{13} + x_{14} + x_{15} + x_{16}, && \quad (S_5)   \\
            & = x_7 + x_8 + x_{9} + x_{10} + x_{15} + x_{16}, && \quad (S_6)   \\
            &= x_7 + x_8 + x_{11} + x_{12} + x_{13} + x_{14}. && \quad (S_7)   
    \end{alignat*}
    Concretely, there are $\gaussbinom{4-1}{2+1-1} = \gaussbinom{3}{2} = 7$ recovery sets of size 6 for $a_1$. Moreover, each of the 14 coordinates $x_j,\, j \in \{3, 4, \dots, 16\}$ appears in exactly $\gaussbinom{2}{1} = 3$ of them, satisfying the identity \( 7 \cdot 6 = 14 \cdot 3 \).
    
    Figure~\ref{fig:flower} illustrates this geometric structure as a "flower". The subspace \( \mathscr{S} \) is represented as the red kernel. Each blue petal (e.g., \( S_1, S_2 \)) represents a distinct recovery set of size 6, which forms the complement of \( \mathscr{S} \) within a 3-dimensional subspace containing it. Also, \( S_1 \cap S_2 = \{x_3, x_4\}\).
    
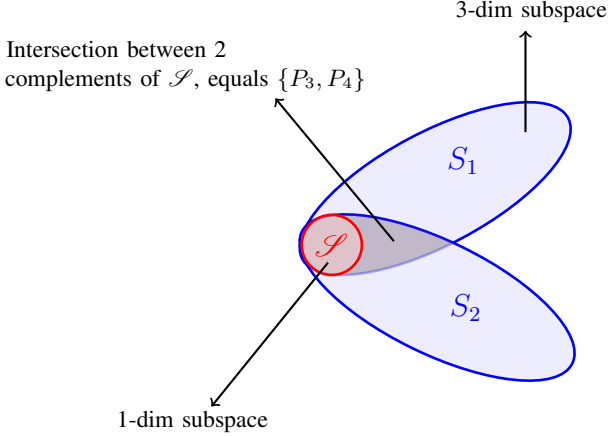
\begin{figure}[ht]
        \centering
        \begin{tikzpicture}[scale=0.5]
\begin{scope}[shift={(3.7,2)},rotate=28]                       
  \filldraw[fill=blue!10,fill opacity=0.7,draw=blue,line width=1] (0,0)
            ellipse (4.0 and 1.5);
  \node[blue,font=\large\bfseries] at (1.0,0.3) {$S_{1}$};
\end{scope}

--- second petal  S2  ---------------------------------------------------------
\begin{scope}[shift={(3.75,-0.9)},rotate=-26]
  \filldraw[fill=blue!10,fill opacity=0.7,draw=blue,line width=1] (0,0)
            ellipse (4.0 and 1.5);
  \node[blue,font=\large\bfseries] at (0.8,0.1) {$S_{2}$};
\end{scope}

\begin{scope}
  \clip[shift={(3.7,2)},rotate=28] (0,0) ellipse (4.0 and 1.5);
  
  \fill[teal!60!red, fill opacity=0.3]      
        [shift={(3.75,-0.9)},rotate=-26]
        (0,0) ellipse (4.0 and 1.5);
\end{scope}

\filldraw[fill=red!20, fill opacity=0.5, draw=red, line width=1] (0.95,0.5) circle (0.8);
\node[red,font=\large\bfseries] at (0.95,0.5) {$\mathscr{S}$};

\draw[thick,->] (0.85,0.03) -- (-2.25,-3.8);
\node[anchor=east] at (-0.5,-4.2) {\small 1-dim subspace};

\draw[thick,->] (6.1,3.5) -- (6.1,6.2);
\node[anchor=west] at (4.0,6.7) {\small 3-dim subspace};

\draw[thick,->] (2.6,0.6) -- (-0.55,4.4);
\node[anchor=east] at (-1.6,5.7) {\small Intersection between 2};
\node[anchor=east] at (2.1,4.9) {\small complements of $\mathscr{S}$, equals $\{P_3, P_4\}$};


\end{tikzpicture}

        \caption{A schematic “flower” structure: the kernel (red) is the 1-dimensional subspace \( \mathscr{S} \), while each petal (blue) represents a recovery set \( S_j \) formed as the complement of \( \mathscr{S} \) in a 3-dimensional subspace.}
        \label{fig:flower}
    \end{figure}
\end{exmp}

\section{One-step Majority-logic Decoding of RM codes}\label{sec:RM_MLD}
We next describe how we use the recovery sets described in the previous section to construct an 1S-MLD procedure for $\text{RM}(r,m)$. Specifically, for each message symbol \( a_{\sigma^{\ell}} \) of order \( \ell \in \{0, 1, \dots, r\} \), we use the following recovery sets in the decoding process.
\begin{myrefbox}[box:recovery]{Recovery sets used for 1S-MLD}
  \begin{itemize}[leftmargin=*]
    \item A unique recovery set $S$ of size \( 2^{\ell} \), whose corresponding points \( \{ P_j \mid j \in S \} \) form an \( \ell \)-dimensional linear subspace \( \mathscr{S} \subset \mathrm{EG}(m, 2) \).
    \item Exactly \( \gaussbinom{m - \ell}{r + 1 - \ell} \) recovery sets of size precisely \( 2^{r+1} - 2^{\ell} \). For each such recovery set \( T \), the points \( \{ P_j \mid j \in T \} \) forms a complement of \( \mathscr{S} \) within a distinct \( (r+1) \)-dimensional linear subspace containing \( \mathscr{S} \). Moreover, each codeword symbol \( x_j, \, j \notin S \), belongs to exactly \( \gaussbinom{m - \ell - 1}{r - \ell} \) of these recovery sets.
\end{itemize}
\end{myrefbox}
We now investigate the maximum number of errors and erasures that this 1S-MLD can correct, starting with the following example.
\begin{exmp}
Consider \( \mathrm{RM}(2, 4) \) in Example~\ref{ex:RM24} whose minimum distance is \( d_{\min} = 2^{4 - 2} = 4 \). The symbol \( a_1 \) has eight recovery sets: one of size 2 (denoted \( S \)) and seven of size 6 (denoted \( S_1 \) to \( S_7 \)). Each coordinate \( x_j \), for \( j \in \{3, 4, \dots, 16\} \), appears in exactly three of the size-6 recovery sets, while \( x_1 \) and \( x_2 \) appear only in the size-2 set. As a result, the 1S-MLD can correct any single error. However, if two errors occur (e.g., at \( x_3 \) and \( x_5 \)), they may corrupt five out of the eight votes, causing  MLD to fail.

    In contrast, to make all recovery sets unusable by erasures, at least four coordinates must be erased (e.g., \( x_1, x_3, x_5, x_7 \)), ensuring that each recovery set contains at least one erased symbol. An exhaustive search (e.g., by a simple computer program) confirms that any three or fewer erasures will always leave at least one recovery set unerased. 

    This example demonstrates that the 1S-MLD achieves its theoretical limits for \( \mathrm{RM}(2, 4) \): it can correct up to \( d_{\min}/4 = 1 \) error and up to \( d_{\min} - 1 = 3 \) erasures when recovering \( a_1 \).
\end{exmp}

As suggested in the example above, the following theorem shows that this decoding method achieves optimal error-correction performance under the 1S-MLD setting: specifically, it can correct any error pattern of weight up to \( d_{\min} / 4 = 2^{m - r - 2} \). 
\begin{theorem} [Optimal Error Correction]
    For Reed--Muller code \( \mathrm{RM}(r, m) \), 1S-MLD using the recovery sets described in Box~\ref{box:recovery} can correct up to \( d_{\min}/4 = 2^{m - r - 2} \) errors.
    \label{thm:optimal_error_decoder}
\end{theorem}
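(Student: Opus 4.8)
The plan is to treat each message symbol in isolation and show that a strict majority of its votes survives any pattern of $t = 2^{m-r-2}$ errors; since the 1S-MLD decodes all symbols in parallel, simultaneous success on every symbol is exactly what we need. Fix a symbol $a_{\sigma^{\ell}}$ of order $\ell \in \{0,1,\dots,r\}$. By Box~\ref{box:recovery} it is equipped with one \emph{small} vote of size $2^{\ell}$ supported on the subspace $\mathscr{S}$, together with $L := \gaussbinom{m-\ell}{r+1-\ell}$ \emph{large} votes, each the complement of $\mathscr{S}$ inside an $(r+1)$-flat through $\mathscr{S}$, giving $N = 1 + L$ votes in total. The structural fact I would lean on is the incidence count of Theorem~\ref{thm:RecoverySetCount}: every coordinate $P_j \notin \mathscr{S}$ lies in exactly $\lambda := \gaussbinom{m-\ell-1}{r-\ell}$ large votes, whereas a coordinate inside $\mathscr{S}$ lies in none of them and only in the small vote.

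First I would reduce correctness to a single inequality. Call a vote \emph{corrupted} if at least one error lands in its support; the majority vote returns $a_{\sigma^{\ell}}$ correctly as soon as the number $c$ of corrupted votes satisfies $c < N/2$. Charging each error to the votes it spoils, an error outside $\mathscr{S}$ spoils at most $\lambda$ (large) votes while an error inside $\mathscr{S}$ spoils only the single small vote. Since $\lambda \ge 1$ on the admissible range (as $0 \le r-\ell \le m-\ell-1$ whenever $m \ge r+1$), and an interior error corrupts only one vote while an exterior one corrupts at most $\lambda$, replacing interior errors by exterior ones never decreases the tally; hence $c \le t\lambda$ in all cases, the bound being tightest when every error avoids $\mathscr{S}$. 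It therefore suffices to prove the clean inequality $2t\lambda \le L$, as this forces $c \le t\lambda \le L/2 < (1+L)/2 = N/2$.

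To verify $2t\lambda \le L$ I would first trade the two Gaussian binomials for an explicit ratio. Double-counting the pairs $(P_j, T)$ with $T$ an $(r+1)$-flat through $\mathscr{S}$ and $P_j \in T \setminus \mathscr{S}$ yields $L\,(2^{r+1}-2^{\ell}) = (2^m - 2^{\ell})\,\lambda$, hence $L/\lambda = (2^{m-\ell}-1)/(2^{r+1-\ell}-1)$. Substituting $t = 2^{m-r-2}$, the target $2t\lambda \le L$ becomes
\[
2^{m-r-1} \;\le\; \frac{2^{m-\ell}-1}{2^{r+1-\ell}-1};
\]
clearing the positive denominator and cancelling the common $2^{m-\ell}$ collapses this to $2^{m-r-1} \ge 1$, i.e.\ $m \ge r+1$, which holds by the standing assumption. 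As the final inequality is independent of $\ell$, the majority vote succeeds for every order and every symbol at once, so the decoder corrects all $t = 2^{m-r-2}$ errors.

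The main obstacle is the last cancellation: it shows the inequality is \emph{exactly} tight, with $2t\lambda = L$ precisely when $m = r+1$, so there is no slack at the boundary and the fraction $d_{\min}/4$ cannot be pushed higher with this vote family. A secondary subtlety worth care is the reduction $c \le t\lambda$: one must confirm that overlaps among the large votes can only decrease $c$, so that the crude union bound—rather than a finer intersection-aware estimate—is the right quantity to control.
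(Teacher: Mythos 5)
Your proof is correct and takes essentially the same route as the paper's: a union bound giving at most \( t\lambda \) corrupted votes, followed by the ratio bound \( L/\lambda = (2^m - 2^{\ell})/(2^{r+1} - 2^{\ell}) \ge 2^{m-r-1} \), so that strictly fewer than half of the \( 1 + L \) votes are spoiled. The only cosmetic differences are that you obtain the ratio by double-counting point--subspace incidences rather than by expanding the Gaussian binomials directly, and that you additionally note the inequality is tight at \( m = r+1 \).
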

\begin{proof}
    For each message symbol of an arbitrary order $\ell \in [0,\, r]$, the total number of recovery sets (i.e., votes) for \( a_{\sigma^{\ell}} \) is 
    \[
        1 + \gaussbinom{m - \ell}{r + 1 - \ell}.
    \]
    We show that when at most \( 2^{m - r - 2} \) errors occur, fewer than half of these votes are corrupted, ensuring correct decoding via majority voting. Indeed, each error at a codeword coordinate \( j \) corrupts exactly one vote if \( j \in S \), or exactly \( \gaussbinom{m - \ell - 1}{r - \ell} \) votes if \( j \notin {S} \). Therefore, the worst-case total number of corrupted votes caused by \( 2^{m - r - 2} \) errors is
    \[
        2^{m - r - 2} \cdot \gaussbinom{m - \ell - 1}{r - \ell}.
    \]

    We now compare this quantity to half the number of available votes. Observe that
    \begin{align*}
        \frac{\gaussbinom{m - \ell}{r + 1 - \ell}}{\gaussbinom{m - \ell - 1}{r - \ell}} 
        &= \frac{\prod\limits_{i=0}^{r - \ell} (1 - 2^{m - \ell - i}) / (1 - 2^{i + 1})}
                {\prod\limits_{i=0}^{r - \ell - 1} (1 - 2^{m - \ell - 1 - i}) / (1 - 2^{i + 1})} \\
        &= \frac{2^m - 2^{\ell}}{2^{r+1} - 2^{\ell}} \\
        &\ge \frac{2^m - 1}{2^{r+1} - 1}, \quad \text{for all } \ell \in \{0, 1, \dots, r\} \\
        &\ge 2^{m - r - 1}, \quad \text{for all } m \ge r + 1.
    \end{align*}
    It follows that
    \begin{align*}
        2^{m - r - 2} \cdot \gaussbinom{m - \ell - 1}{r - \ell}
         & \le \frac{1}{2} \gaussbinom{m - \ell}{r + 1 - \ell} \\
         & < \frac{1}{2} \left(\gaussbinom{m - \ell}{r + 1 - \ell}+1\right).
    \end{align*}
    Hence, strictly less than half of the votes are corrupted, and the majority-logic decoder correctly recovers the message symbol \( a_{\sigma^{\ell}} \) for each \( \ell \in \{0, 1, \dots, r\} \). 
\end{proof}

The next question is whether the proposed decoder is also optimal in correcting erasures, that is, whether it can recover from any erasure pattern of weight up to \( d_{\min} - 1 \). Using the same reasoning as in the proof of Theorem~\ref{thm:optimal_error_decoder}, one can initially argue that at least \( 2^{m - r - 1} = d_{\min} / 2 \) erasures are required to disable all votes, which implies that the decoder can correct up to \( 2^{m - r - 1} -1\) erasures. However, this estimate is conservative. 

We demonstrate below that using the geometric structure of the recovery sets enables us to significantly enhance this erasure correction capability. We first, in Theorem~\ref{thm:optimal_transversal}, establish a result on the minimum size of a transversal for truncated $r$-flats pertaining to the "flower" structure discussed above.  We then use this result and the geometric properties of RM codes to prove that our 1S-MLD (based on the recovery sets in Box~\ref{box:recovery}) can correct up to exactly $d_{\min}-1$ erasures at Theorem~\ref{thm:optimal_erasure_decoder}.

\begin{theorem}[Optimal Transversal/Blocking set for Truncated $r$-Flats]
\label{thm:optimal_transversal}
Let \( V = \mathrm{EG}(m, 2) = \mathbb{F}_2^m \), and fix a subspace \( \mathscr{S} \subseteq V \) with \( \dim \mathscr{S} = \ell < m \). For any integer \( r \) satisfying \( \ell < r \le m \), define the collection
\[
\mathcal{C} := \left\{ F \setminus \mathscr{S} \mid F \le V,\, \dim F = r,\, \mathscr{S} \subseteq F \right\}.
\]
That is, \( \mathcal{C} \) consists of the complements of \( \mathscr{S} \) within all
\(r\)-subspaces of \(V\) that contain \( \mathscr{S} \). Then the minimum size of a set that intersects every member of \( \mathcal{C} \) (i.e., the transversal number) is
\[
\tau(\mathcal{C}) = 2^{m - r + 1} - 1,
\]
and one natural example of such a minimum transversal is the set of nonzero vectors of a subspace \( U \le V \) with \( \dim U = m - r + 1 \) and \( U \cap \mathscr{S} = \{0\} \).
\end{theorem}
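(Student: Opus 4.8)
The plan is to prove the two matching bounds separately, treating the displayed subspace $U$ as the witness for the upper bound and a greedy subspace-growing construction for the lower bound. For the upper bound $\tau(\mathcal{C}) \le 2^{m-r+1}-1$, I would first note that a subspace $U$ with $\dim U = m-r+1$ and $U \cap \mathscr{S} = \{0\}$ exists precisely because $(m-r+1)+\ell \le m$ (equivalently $\ell \le r-1$, which is the hypothesis $\ell < r$); one may take $U$ to be any $(m-r+1)$-dimensional subspace of a complement of $\mathscr{S}$. Then $U \setminus \{0\}$ has exactly $2^{m-r+1}-1$ points, and I would verify it is a transversal: for any $r$-dimensional $F \supseteq \mathscr{S}$ the dimension formula gives $\dim(U \cap F) \ge \dim U + \dim F - m = 1$, so $U \cap F \ne \{0\}$; moreover $U \cap F \not\subseteq \mathscr{S}$, since $U \cap F \subseteq \mathscr{S}$ would force $U \cap F \subseteq U \cap \mathscr{S} = \{0\}$, contradicting $\dim(U \cap F) \ge 1$. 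Hence $U \cap F$ contains a vector lying in $F \setminus \mathscr{S}$ and, being nonzero, in $U \setminus \{0\}$, so $U \setminus \{0\}$ meets every member of $\mathcal{C}$.

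For the lower bound I would argue by contradiction. Let $B$ be any transversal; since no point of $\mathscr{S}$ lies in any member of $\mathcal{C}$, I may assume $B \cap \mathscr{S} = \emptyset$. Suppose $|B| \le 2^{m-r+1}-2$, and build a chain $\mathscr{S} = F_\ell \subset F_{\ell+1} \subset \cdots \subset F_r$ with $\dim F_j = j$, $\mathscr{S} \subseteq F_j$, and $(F_j \setminus \mathscr{S}) \cap B = \emptyset$ at every stage. The base case $F_\ell = \mathscr{S}$ is immediate since $F_\ell \setminus \mathscr{S} = \emptyset$. To pass from $F_j$ to $F_{j+1} = F_j \oplus \langle w \rangle$, the only new points are those of the coset $w + F_j$, so I need a nonzero coset of $F_j$ that avoids $B$. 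There are $2^{m-j}-1$ nonzero cosets, and since $B \cap F_j = \emptyset$ every point of $B$ lies in one of them, so at most $|B|$ cosets are blocked. An unblocked coset exists whenever $|B| < 2^{m-j}-1$; because $j \le r-1$ we have $2^{m-j}-1 \ge 2^{m-r+1}-1 > |B|$, so the extension always succeeds. Iterating up to $j=r$ produces an $r$-dimensional $F_r \supseteq \mathscr{S}$ with $(F_r \setminus \mathscr{S}) \cap B = \emptyset$, contradicting that $B$ meets $F_r \setminus \mathscr{S} \in \mathcal{C}$. Hence $|B| \ge 2^{m-r+1}-1$.

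The main obstacle is that the bound is tight and the naive counting estimates fall short of it: both the union bound over the flats in $\mathcal{C}$ and the averaging of $|B \cap H|$ over all hyperplanes $H$ yield only about $2^{m-r}$, losing a factor of roughly two against the true value $2^{m-r+1}-1$. The crux is therefore the greedy chain, whose decisive quantitative input is that the tightest coset-counting constraint occurs at the final step $j = r-1$ and reads exactly $|B| < 2^{m-r+1}-1$. Conceptually this is a Bose--Burton-type phenomenon: quotienting by $\mathscr{S}$ turns $\mathcal{C}$ into the family of all $(r-\ell)$-dimensional subspaces of $\mathbb{F}_2^{m-\ell}$ punctured at the origin, so that $\tau(\mathcal{C})$ is the minimum size of a set blocking all such subspaces. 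I would mention this quotient viewpoint for intuition but carry out the construction directly in $V$, both to keep the argument self-contained and to exhibit the extremal transversal concretely as $U \setminus \{0\}$.
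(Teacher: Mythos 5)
Your proof is correct, but it takes a genuinely different route from the paper. The paper proves this result in four steps: it passes to the quotient $W = V/\mathscr{S}$, identifies the nonzero vectors of $W$ with the points of $\mathrm{PG}(m-\ell-1,2)$ so that the transversal problem becomes a classical blocking-set problem with respect to $(r-\ell-1)$-spaces, invokes the Bose--Burton theorem as a black box to get the minimum size $\theta_{m-r}(2) = 2^{m-r+1}-1$, and then lifts back to $V$ --- a step requiring some care, since a preimage transversal need only consist of one representative per nonzero coset of $\pi^{-1}(U')$ and need not be a subspace. You instead work entirely in $V$: your upper bound verifies directly via the dimension formula $\dim(U\cap F)\ge \dim U + \dim F - m = 1$ together with $U\cap\mathscr{S}=\{0\}$ that $U\setminus\{0\}$ blocks every truncated flat, and your lower bound is a greedy chain $\mathscr{S}=F_\ell\subset\cdots\subset F_r$ that extends one dimension at a time by choosing a nonzero coset of $F_j$ disjoint from $B$, with the binding constraint occurring exactly at $j=r-1$, where the requirement $|B|<2^{m-r+1}-1$ matches the bound tightly. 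In effect you have inlined a self-contained proof of the Bose--Burton bound, adapted to the truncated setting, rather than citing it; you even note the quotient viewpoint as intuition, which is precisely the paper's Steps 1--2. What each approach buys: the paper's version is modular, connects the result to the finite-geometry literature, and makes the generalization to $\mathbb{F}_q$ (replacing the bound by $\theta_{m-r}(q)$, as in the paper's remark) immediate; yours is elementary and self-contained, sidesteps the lifting subtleties entirely, makes the tightness of the bound transparent through the coset count at the last greedy step, and your observation that naive union-bound or hyperplane-averaging arguments lose a factor of two correctly identifies why some constructive mechanism like the chain (or Bose--Burton) is genuinely needed.
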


\begin{proof}
We proceed in four steps. First, we reduce the problem to a quotient space \( W = V / \mathscr{S} \), translating the task of intersecting affine subspace complements into a condition on subspaces of \( W \) (Step 1). We then reinterpret this condition in the setting of projective geometry \( \mathrm{PG}(n, 2) \), converting the transversal problem into a classical blocking set problem (Step 2). In Step 3, we apply a result of Bose and Burton to identify the minimal transversal size in \( \mathrm{PG}(n, 2) \). Finally, we lift the result back to \( V \), showing that the preimage of such a transversal corresponds to the claimed optimal transversal (Step 4).
\paragraph*{\textbf{Step 1 (Quotient Reduction)}}
 Let \( W := V / \mathscr{S} \cong \mathbb{F}_2^{m - \ell} \), and denote by \( \pi : V \to W \) the canonical projection \( \pi(v) = v + \mathscr{S} \). For each subspace \( F \supseteq \mathscr{S} \) with \( \dim F = r \), the image \( \bar{F} := F / \mathscr{S} \) is an \( (r - \ell) \)-dimensional subspace of \( W \), and we have
\[
\pi(F \setminus \mathscr{S}) = \bar{F} \setminus \{0\}.
\]
A set \( T \subseteq V \) intersects every set in \( \mathcal{C} \) if and only if its image under the projection intersects every \( (r - \ell) \)-dimensional subspace of \( W \). Specifically, the set \( B := \pi(T \setminus \mathscr{S}) \subseteq W \setminus \{0\} \) must intersect every such subspace.

\paragraph*{\textbf{Step 2 (Projectivization)}} We identify the nonzero vectors of \( W \) with the points of the projective space \( \mathrm{PG}(n, 2) \), where \( n := m - \ell - 1 \). The subspaces \( \bar{F} \) of dimension \( r - \ell \) in \( W \) correspond to \( k \)-dimensional subspaces in \( \mathrm{PG}(n, 2) \), with \( k := r - \ell - 1 \). Thus, the problem reduces to the classical transversal identification problem of finding the minimal number of points in \( \mathrm{PG}(n, 2) \) that intersect every \( k \)-dimensional subspace.

\paragraph*{\textbf{Step 3 (Bose–Burton Theorem Application)}} According to a result by Bose and Burton~\cite{BOSE196696} (see also~\cite{Blocking_subspaces}), the smallest transversal with respect to \( k \)-dimensional subspaces in \( \mathrm{PG}(n, 2) \) is the set of points contained in an \( (n - k) \)-dimensional subspace. In our case, this yields a minimal transversal of size
\[
\theta_{n - k}(2) = \frac{2^{n - k + 1} - 1}{2 - 1} = 2^{m - r + 1} - 1 = \theta_{m-r}(2).
\]

\paragraph*{\textbf{Step 4 (Lifting to \( V \))}} Let $U' \subseteq W$ be an $(m-r+1)$-dimensional subspace that defines a minimum transversal in $\mathrm{PG}(n,2)$, i.e., $U' \setminus \{0\}$. 
Choose a linear complement $U \subseteq V$ of $\mathscr{S}$ such that $\pi(U)=U'$; then $\dim U = m-r+1$ and $U \cap \mathscr{S} = \{0\}$. 
The set 
\[
T := U \setminus \{0\}
\]
projects bijectively onto $U' \setminus \{0\}$, and hence $T$ intersects every member of $\mathcal{C}$. 
This proves that a transversal of size $2^{\,m-r+1}-1$ exists in $V$.  

More generally, any transversal in \( V \) must project to a transversal in \( \mathrm{PG}(n,2) \), so its size is at least \( 2^{\,m-r+1}-1 \). 
Conversely, once \( U' \) is fixed, one can obtain a minimum transversal in \( V \) by selecting \emph{exactly one representative from each nonzero coset} in \( \pi^{-1}(U') \). 
Such a choice need not be a linear subspace of \( V \), but it still yields a transversal of size \( 2^{\,m-r+1}-1 \). 

Thus, every minimum transversal in \( V \) corresponds to a set whose projection equals \( U' \setminus \{0\} \) for some \( (m-r+1) \)-subspace \( U' \le W \). 
In particular, one always exists of the form \( U \setminus \{0\} \) with \( U \cap \mathscr{S}=\{0\} \).
\end{proof}

\begin{remark}
\leavevmode
\begin{enumerate}[label=\textup{(\roman*)}]
    \item \textcolor{black}{The transversal number $\tau(\mathcal{C})$ is independent of the dimension $\ell$ of the kernel subspace.}
  \item When \( \ell = 0 \), Theorem~\ref{thm:optimal_transversal} recovers the classical result for blocking all \( r \)-flats in \( \mathrm{EG}(m, 2) \).
  \item The argument extends to general finite fields \( \mathbb{F}_q \), replacing \( 2^{m - r + 1} - 1 \) with \( \theta_{m - r}(q) = \frac{q^{m - r + 1} - 1}{q - 1} \).
\end{enumerate}
\end{remark}

\begin{theorem}[Optimum Erasure Decoding]
    For \( \mathrm{RM}(r, m) \), the 1S-MLD using the recovery sets described in Box~\ref{box:recovery} can correct any erasure pattern of weight up to \( d_{\min} - 1 = 2^{m - r} - 1 \).
    \label{thm:optimal_erasure_decoder}
\end{theorem}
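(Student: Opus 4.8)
The plan is to recast erasure correction as a covering (transversal) condition on the recovery sets and then invoke Theorem~\ref{thm:optimal_transversal}. For a fixed message symbol $a_{\sigma^{\ell}}$, the decoder recovers $a_{\sigma^{\ell}}$ precisely when at least one of its recovery sets—either the size-$2^{\ell}$ set $S$ corresponding to $\mathscr{S}$, or one of the size-$(2^{r+1}-2^{\ell})$ complements $F \setminus \mathscr{S}$ over $(r+1)$-dimensional $F \supseteq \mathscr{S}$—is left entirely un-erased. Hence decoding of $a_{\sigma^{\ell}}$ fails under an erasure set $E \subseteq \mathrm{EG}(m,2)$ if and only if $E$ meets every one of these recovery sets, i.e.\ $E$ is a transversal of the family $\{S\} \cup \{F \setminus \mathscr{S}\}$. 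The theorem therefore reduces to showing that, for each $\ell \in \{0,1,\dots,r\}$, the minimum size of such a transversal is at least $2^{m-r}$; any pattern of weight at most $2^{m-r}-1 = d_{\min}-1$ then necessarily leaves a usable recovery set for every symbol.

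To bound this transversal number from below, I would split $E$ along $\mathscr{S}$, writing $E = (E \cap \mathscr{S}) \sqcup (E \setminus \mathscr{S})$. Because every complement $F \setminus \mathscr{S}$ is contained in $\mathrm{EG}(m,2) \setminus \mathscr{S}$, only the points of $E$ outside $\mathscr{S}$ can meet these complements; thus $E \setminus \mathscr{S}$ must by itself be a transversal of the collection $\{F \setminus \mathscr{S} : \mathscr{S} \subseteq F,\ \dim F = r+1\}$. Applying Theorem~\ref{thm:optimal_transversal} with its parameter $r$ replaced by $r+1$ (legitimate since $\ell \le r$ and $m \ge r+1$) yields $|E \setminus \mathscr{S}| \ge 2^{\,m-(r+1)+1}-1 = 2^{m-r}-1$. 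Separately, disabling the recovery set $S$ forces $E$ to erase at least one point of $\mathscr{S}$, so $|E \cap \mathscr{S}| \ge 1$. Since the two pieces are disjoint, $|E| \ge (2^{m-r}-1) + 1 = 2^{m-r}$, as required.

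For optimality I would exhibit a transversal of size exactly $2^{m-r}$: choosing a linear complement $U$ of $\mathscr{S}$ with $\dim U = m-r$, the natural-example part of Theorem~\ref{thm:optimal_transversal} (again at parameter $r+1$) shows $U \setminus \{0\}$ blocks every complement, while the origin $0 \in \mathscr{S}$ blocks $S$, so $U = (U \setminus \{0\}) \cup \{0\}$ is a blocking set of cardinality $2^{m-r}$. This confirms the bound is tight, so $d_{\min}-1$ erasures is the exact threshold, matching the information-theoretic maximum for unique decoding. The one genuinely delicate point—and the reason the naive estimate $2^{m-r-1}-1$ in the paragraph preceding the theorem is too weak—is the bookkeeping that separates the single small recovery set $S$ from the family of large complements: the correct bound is obtained not by disabling all votes uniformly, but by recognizing that the extra unit needed to additionally block $S$ lifts the complement-transversal bound $2^{m-r}-1$ to exactly $2^{m-r}$. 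Verifying that Theorem~\ref{thm:optimal_transversal} indeed applies at the shifted parameter $r+1$—so that the $(r+1)$-flats used in Box~\ref{box:recovery}, rather than $r$-flats, drive the count—is the main thing to get right.
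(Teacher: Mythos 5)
Your proposal is correct and follows essentially the same route as the paper: both reduce erasure failure to a transversal condition on the family of recovery sets and invoke Theorem~\ref{thm:optimal_transversal} at the shifted parameter $r+1$ (valid since $\ell \le r$ and $m \ge r+1$) to get the bound $2^{m-r}-1$ for blocking the complements, with the small set $S$ supplying the extra unit that makes full blocking cost $2^{m-r}$. Your version merely makes explicit two points the paper's terse proof leaves implicit—the disjoint split $E = (E\cap\mathscr{S}) \sqcup (E\setminus\mathscr{S})$ and the tightness construction $U \cup \{0\}$ of size $2^{m-r}$—which is sound but not a different argument.
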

\begin{proof}
Recall that there are \( \gaussbinom{m - \ell}{\,r + 1 - \ell\,} \) recovery sets of size \( 2^{r+1} - 2^{\ell} \), each arising as the complement of \( \mathscr{S} \) within a distinct \( (r+1) \)-dimensional subspace of \( \mathrm{EG}(m, 2) \) containing \( \mathscr{S} \). 
By Theorem~\ref{thm:optimal_transversal}, the smallest set of coordinates that blocks all such recovery sets—while leaving the \( 2^\ell \)-sized recovery set unblocked—has cardinality \( 2^{m-r} - 1 \). Equivalently, any set of at most \( 2^{m-r} - 1 \) erased coordinates leaves at least one recovery set unblocked. This concludes the proof.
\end{proof}
\textcolor{black}{
From Theorems~\ref{thm:optimal_error_decoder} and~\ref{thm:optimal_erasure_decoder}, we observe a distinct behavior between error and erasure correction. For a message symbol of order $\ell$ ($1\le \ell \le r$), the number of erasures required to block all recovery sets is constant at $2^{m-r}$, independent of order $\ell$ of the message symbol. In contrast, the maximum number of errors that can be corrected by majority-logic is given by
\[
\frac{\left[ \begin{smallmatrix} m - \ell \\ r + 1 - \ell \end{smallmatrix} \right]_2}{2\left[ \begin{smallmatrix} m - \ell - 1 \\ r - \ell \end{smallmatrix} \right]_2}  
= \frac{2^m-2^{\ell}}{2(2^{r+1}-2^{\ell})},
\]
which is a strictly increasing function of $\ell$ for fixed parameters $r$ and $m$. This observation leads to the following corollary.
\begin{corollary}
For a message symbol of order $\ell$, the correction capability against erasures (which is $2^{m-r}$) is constant across all values of $\ell$, whereas the correction capability against errors $\left(\text{which is }\frac{\gaussbinom{m - \ell}{r + 1 - \ell}}{2\gaussbinom{m - \ell - 1}{r - \ell}}\right)$ increases with $\ell$.
\end{corollary}
}
\section{Conclusion}\label{sec:conclusion}
This work introduces the first one-step majority-logic decoding method for Reed--Muller codes that applies to all code parameters \( r \) and \( m \). We prove that the proposed decoder achieves optimal performance in both error and erasure correction within the one-step decoding framework. A natural direction for future work is to investigate the use of 1S-MLD for other classes of codes, including \( q \)-ary Reed--Muller codes, and Coxeter codes.
\section*{Acknowledgment}
The authors thank V.~Lalitha, Michael Schleppy, and Andrea Di Giusto for valuable discussions and observations.
\balance
\bibliography{bibliography}
\bibliographystyle{IEEEtran}


\end{document}